\newenvironment{algorithm}[1][\  ] %
{
\rm
\begin{tabbing} 
....\=...\=...\=...\=...\=  \+ \kill
} %
{\end{tabbing}
}
\newtheorem{definition}{Definition}[section]
\newtheorem{lemma}{Lemma}[section]
\newtheorem{corollary}{Corollary}[section]
\newtheorem{theorem}{Theorem}[section]
\newtheorem{proposition}{Proposition}[section]
\newcommand{\bx}{{\mathbf x}}
\newcommand{\by}{{\mathbf y}}
\newcommand{\bz}{{\mathbf z}}
\newcommand{\bu}{{\mathbf u}}
\newcommand{\be}{{\mathbf e}}
\newcommand{\sgn}{{\mathrm{sgn}}}
\newcommand{\supp}{{\mathrm{supp}}}
\newcommand{\Real}{{\mathbb{R}}}
\newcommand{\BlackBox}{\rule{1.5ex}{1.5ex}}  
\newenvironment{proof}{\par\noindent{\bf Proof\ }}{\hfill\BlackBox\\[2mm]}
\begin{document}

\title{Sparse Recovery with Orthogonal Matching Pursuit under RIP}
\date{}
\author{Tong~Zhang,~\IEEEmembership{Member,~IEEE,} %
\thanks{T. Zhang is with the Statistics Department,
  Rutgers University, New Jersey,  USA. E-mail: tzhang@stat.rutgers.edu. 
  The author was partially supported by the following grants: AFOSR-10097389, NSA-AMS 081024, NSF DMS-1007527, and NSF IIS-1016061.
} 
}

\maketitle

\begin{abstract}
 This paper presents a new analysis for the orthogonal matching pursuit (OMP) algorithm.
 It is shown that if the restricted isometry property (RIP) is satisfied at sparsity level $O(\bar{k})$, then OMP can stably recover a $\bar{k}$-sparse signal in 2-norm under measurement noise. 
For compressed sensing applications, this result implies that in order to uniformly
recover a $\bar{k}$-sparse signal in $\Real^d$, only $O(\bar{k} \ln d)$ random projections are needed.
This analysis improves some earlier results on OMP depending on stronger conditions that can only be satisfied with $\Omega(\bar{k}^2 \ln d)$ or $\Omega(\bar{k}^{1.6} \ln d)$ random projections.
\end{abstract}

\begin{IEEEkeywords}
Estimation theory, feature selection, greedy algorithms, statistical learning, sparse recovery
\end{IEEEkeywords}

\section{Introduction}

Consider a signal $\bar{\bx} \in \Real^d$,
and suppose that we observe its linear transformation plus measurement noise as:
\[
\by= A \bar{\bx} + \text{noise} .
\]
Here, $A$ is an $n \times d$ matrix.
If we define an objective function 
\begin{equation}
Q(\bx) = \| A \bx - \by \|_2^2 , \label{eq:quad-opt}
\end{equation}
then we may estimate the parameter $\bar{\bx}$ by minimizing $Q(\bx)$, subject to appropriate
constraints.

If $d > n$, then the solution of the unconstrained optimization problem
\begin{equation}
\min_{\bx \in \Real^d} Q(\bx) \label{eq:quad-opt-unconstr}
\end{equation}
is not unique. In order to estimate $\bar{\bx}$,
additional assumptions on $\bar{\bx}$ is necessary.
We are specifically interested in the case where $\bar{\bx}$ is sparse.
That is $\|\bar{\bx}\|_0 \ll n$, where
\[
\|x\|_0 = |\supp(x)|, \quad \supp(x)=\{j: x_j \neq 0\} .
\]
It is known that under appropriate conditions, it is possible 
to recover $\bar{\bx}$ by solving (\ref{eq:quad-opt-unconstr}) with a sparsity 
constraint as follows:
\begin{equation}
\min_{\bx \in \Real^d} Q(\bx) \qquad \text{ subject to } \|\bx\|_0 \leq k . 
\label{eq:opt}
\end{equation}
However, this optimization problem is generally NP-hard. 
Therefore one seeks computationally efficient algorithms that can approximately solve
(\ref{eq:opt}), with the goal of recovering sparse signal $\bar{\bx}$.
This paper considers the popular orthogonal matching pursuit algorithm (OMP), which has been widely 
used for this purpose (for example, see \cite{DoElTe06,Tropp04,TroppGilbert07}). We are specifically interested in two issues: 
the performance of OMP  in terms of optimizing $Q(\bx)$ and the performance of OMP in terms of
recovering the sparse signal $\bar{\bx}$.

\section{Main Result}

Our analysis considers a more general objective function $Q(\bx)$ that does not necessarily take the
quadratic form in (\ref{eq:quad-opt}). However, we assume that $Q(\bx)$ is convex.
For such a general convex objective function, we consider the fully (or totally) corrective greedy algorithm
in Figure~\ref{fig:forward-greedy}, which was analyzed in \cite{ShSrZh09}. 
This paper refines the analysis to show that the algorithm works under the restricted isometry property (RIP) of \cite{CandTao05-rip}
(the required condition will be described later in this section).
This algorithm is a direct generalization of OMP which has been traditionally considered only for 
the quadratic objective function in (\ref{eq:quad-opt}) with $F^{(0)}=\emptyset$. 
For simplicity, we assume that the number of iterations $k_0$ is chosen a priori.  
The algorithm has been known in the machine learning community
as a version of boosting \cite{WarmuthLiRa06}, and has also been proposed recently in the signal processing community \cite{BluDav08}.

\begin{figure}[ht]
\centering
 \begin{Sbox}
    \begin{minipage}{1.0\linewidth}
      \begin{algorithm}
        Input: $Q(\bx)$ defined on $\Real^d$, \+\+\+\+ \\
        initial feature set  $F^{(0)} \subset \{1,\ldots,d\}$. \-\-\-\- \\
        Output: $\bx^{(k)}$ \\
        let $\bx^{(0)}=\arg\min_{\bx \in \Real^d} Q(\bx) \quad \text{subject to } \supp(\bx) \subset F^{(0)}$ \\
        (default choice is $F^{(0)}=\emptyset$ with $\bx^{(0)}=0$) \\
        {\bf for} $k=1, 2, \ldots, k_0$ \+ \\
        let $j = \arg\max_{i} |\nabla Q(\bx^{(k-1)})_i|$ \\
        let $F^{(k)} = \{j\} \cup F^{(k-1)}$ \\
        let $\bx^{(k)}=\arg\min_{\bx \in \Real^d} Q(\bx) \quad \text{subject to } \supp(\bx) \subset F^{(k)}$ \- \\
        {\bf end}
      \end{algorithm}
    \end{minipage}
  \end{Sbox}\fbox{\TheSbox}
  \caption{Fully Corrective Greedy Boosting Algorithm (OMP)}
\label{fig:forward-greedy}
\end{figure}

For quadratic loss, the objective function $Q(\bx)$ is given by (\ref{eq:quad-opt}) and its derivative is
$\nabla Q(\bx)= 2 A^\top (A \bx - \by)$. Therefore
$j = \arg\max_{i} |\nabla Q(\bx^{(k-1)})_i|$ becomes
$j = \arg\max_{i} |{\bf a}_i^\top (A\bx-\by)|$, where ${\bf a}_i$ is the $i$-th column of matrix $A$.
This, together with $F^{(0)}=\emptyset$, leads to the standard OMP algorithm.
In order to use notation consistent with the sparse recovery literature, in the current paper,
we still refer to the more general algorithm in Figure~\ref{fig:forward-greedy} as OMP
even though it applies to objective functions other than  (\ref{eq:quad-opt}).

The general problem of optimization under sparsity constraint is NP hard. 
In order to alleviate the difficulty, we consider approximate optimization
under the restricted strong convexity assumption introduced below. 
\begin{definition}[Restricted Strong Convexity Constants]
  Given any $s \geq 0$, define restricted
  strong convexity constants $\rho_-(s)$ and $\rho_+(s)$ as follows:
  for all $\|\bx-\bx'\|_0 \leq s$, we require
  \begin{align*}
    \rho_-(s) \|\bx- \bx'\|_2^2\leq& Q(\bx') -Q(\bx) - \nabla Q(\bx)^\top (\bx'-\bx) \\
    \leq& \rho_+(s) \|\bx-\bx'\|_2^2 .
  \end{align*}
\end{definition}
If the objective function takes the quadratic form given by (\ref{eq:quad-opt}), then
the above definition is equivalent to the following sparse eigenvalue condition of $A^\top A$:
$\forall \Delta \bx \in \Real^d$ such that $\|\Delta\bx\|_0 \leq s$,
\begin{equation}
\rho_-(s) \|\Delta\bx\|_2^2 \leq \|A \Delta \bx\|_2^2
\leq \rho_+(s) \|\Delta \bx\|_2^2 . \label{eq:sparse-eig-cond}
\end{equation}
In this case, the constants $\rho_-(s)$ and $\rho_+(s)$ are closely
related to the restricted isometry constant $\delta_s$ in \cite{CandTao05-rip}, which is defined as
a constant that satisfies the condition that $\forall \Delta \bx \in \Real^d$ such that $\|\Delta\bx\|_0 \leq s$:
\[
(1-\delta_s) \|\Delta\bx\|_2^2 \leq \|A \Delta \bx\|_2^2
\leq (1+\delta_s) \|\Delta \bx\|_2^2 . 
\]
The  restricted isometry constant was
used to define the restricted isometry property (RIP) in the analysis of $L_1$ regularization method 
\cite{CandTao05-rip}.
We employ the slightly more general restricted strong convexity constants in (\ref{eq:sparse-eig-cond}) because
our analysis only requires the ratio $\rho_+(s)/\rho_-(s)$ to be bounded, and this is useful for general machine
learning problems where $\rho_+(s)$ can be larger than $2$.

In order to recover the target $\bar{\bx}$, we have to assume that 
$\bar{\bx}$ is sparse and approximately optimizes $Q(\bx)$.
If a target $\bar{\bx}$ is an exact global optimal solution, then $\nabla Q(\bar{\bx})=0$. However, this paper
deals with approximate optimal solutions, 
where $\nabla Q(\bar{\bx}) \approx 0$. In particular, we introduce the
following definition, which is convenient to apply.
\begin{definition}[Restricted Gradient Optimal Constant] \label{def:grad}
  Given $\bar{\bx} \in \Real^d$ and $s>0$, we define the restricted gradient optimal constant
  $\epsilon_s(\bar{\bx})$ as the smallest non-negative value that satisfies the following condition
  \[
  |\nabla Q(\bar{\bx})^\top \bu | \leq \epsilon_s(\bar{\bx}) \|\bu\|_2 
  \]
  for all $\bu \in \Real^d$ such that $\|\bu\|_0 \leq s$.
\end{definition}

The constant $\epsilon_s(\bar{\bx})$ measures how close is $\nabla Q(\bar{\bx})$ to zero. 
If $\nabla Q(\bar{\bx})=0$, then $\epsilon_s(\bar{\bx})=0$. 
If $\nabla Q(\bar{\bx}) \approx 0$, then $\epsilon_s(\bar{\bx})$ is small.
Moreover, similar to the definition of restricted strong convex constants, we are only interested in the value of $\nabla Q(\bar{\bx})$ in any subset of $\{1,\ldots,d\}$ with $s$ elements. The following proposition provides some estimates of
$\epsilon_s(\bar{\bx})$ using quantities that are easier to understand. 

\begin{proposition}
  We have 
  $\epsilon_s(\bar{\bx}) \leq \sqrt{s} \|\nabla Q(\bar{\bx})\|_\infty$ and
  $\epsilon_s(\bar{\bx}) \leq \|\nabla Q(\bar{\bx})\|_2$.
  Moreover, if 
  \[
Q(\bar{\bx}) \leq \inf_{\|\bx\|_0 \leq \|\bar{\bx}\|_0+s} Q(\bx) + \bar{\epsilon} ,
\]
then 
  \[
  \epsilon_s(\bar{\bx}) \leq 2 \sqrt{\rho_+(s) \bar{\epsilon} } .
  \]
  \label{prop:delta}
\end{proposition}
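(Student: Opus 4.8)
The first two inequalities follow directly from the definition of $\epsilon_s(\bar{\bx})$ as the smallest constant satisfying $|\nabla Q(\bar{\bx})^\top \bu| \leq \epsilon_s(\bar{\bx})\|\bu\|_2$, so it suffices to bound $|\nabla Q(\bar{\bx})^\top \bu|$ for an arbitrary $\bu$ with $\|\bu\|_0 \leq s$ and then divide by $\|\bu\|_2$. For the second bound I would simply invoke Cauchy--Schwarz to get $|\nabla Q(\bar{\bx})^\top \bu| \leq \|\nabla Q(\bar{\bx})\|_2\,\|\bu\|_2$. For the first bound I would instead use H\"older, $|\nabla Q(\bar{\bx})^\top \bu| \leq \|\nabla Q(\bar{\bx})\|_\infty\,\|\bu\|_1$, and then turn the sparsity constraint into the estimate $\|\bu\|_1 \leq \sqrt{s}\,\|\bu\|_2$ (Cauchy--Schwarz applied on the at most $s$ nonzero coordinates of $\bu$). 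These two steps produce the two stated constants.

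For the third inequality the plan is a perturbation argument that converts the near-optimality hypothesis into a bound on the directional derivative $\nabla Q(\bar{\bx})^\top \bu$. Fix any $\bu$ with $\|\bu\|_0 \leq s$ and any scalar $t$, and consider the perturbed point $\bar{\bx} + t\bu$. Since $\supp(\bar{\bx}+t\bu) \subseteq \supp(\bar{\bx}) \cup \supp(\bu)$, this point obeys $\|\bar{\bx}+t\bu\|_0 \leq \|\bar{\bx}\|_0 + s$, so the near-optimality hypothesis gives the lower bound $Q(\bar{\bx}+t\bu) \geq Q(\bar{\bx}) - \bar{\epsilon}$. On the other hand, because $\|t\bu\|_0 \leq s$, the upper bound in the restricted strong convexity definition applied to the pair $(\bar{\bx}, \bar{\bx}+t\bu)$ yields $Q(\bar{\bx}+t\bu) \leq Q(\bar{\bx}) + t\,\nabla Q(\bar{\bx})^\top \bu + \rho_+(s)\,t^2 \|\bu\|_2^2$.

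Combining these two estimates and cancelling $Q(\bar{\bx})$, I expect to obtain that the quadratic $\rho_+(s)\|\bu\|_2^2\,t^2 + (\nabla Q(\bar{\bx})^\top \bu)\,t + \bar{\epsilon} \geq 0$ holds for every $t \in \Real$. Since this upward parabola (assuming $\bu \neq 0$) is nonnegative everywhere, its discriminant must be nonpositive, giving $(\nabla Q(\bar{\bx})^\top \bu)^2 \leq 4\,\rho_+(s)\,\bar{\epsilon}\,\|\bu\|_2^2$; the same conclusion follows by completing the square and evaluating at the minimizer $t = -\nabla Q(\bar{\bx})^\top \bu / (2\rho_+(s)\|\bu\|_2^2)$. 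Taking square roots and dividing by $\|\bu\|_2$ then gives $|\nabla Q(\bar{\bx})^\top \bu| \leq 2\sqrt{\rho_+(s)\bar{\epsilon}}\,\|\bu\|_2$, which is precisely the claimed bound on $\epsilon_s(\bar{\bx})$. The one delicate point, and the thing I would be most careful to get right, is to arrange that the near-optimality hypothesis and the restricted strong convexity upper bound apply to the \emph{same} perturbed point simultaneously; this rests on tracking the two distinct sparsity budgets, namely $\|\bar{\bx}+t\bu\|_0 \leq \|\bar{\bx}\|_0 + s$ for the former and $\|t\bu\|_0 \leq s$ for the latter. Once that is secured, the remainder reduces to elementary optimization of a one-variable quadratic.
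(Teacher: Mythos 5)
Your proof is correct and follows essentially the same route as the paper: the perturbation $\bar{\bx}+t\bu$, the restricted strong convexity upper bound with the two sparsity budgets tracked exactly as you describe, and optimization of the resulting one-variable quadratic (your discriminant argument is just the paper's explicit evaluation of $\inf_\eta$ at the vertex). Your Cauchy--Schwarz and H\"older derivations likewise supply precisely what the paper dismisses as ``straight-forward'' for the first two inequalities.
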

\begin{proof}
  The first two inequalities are straight-forward. 
  For the third inequality, we note that for $\|\bu\|_0 \leq s$:
  \begin{align*}
  &\inf_{\|\bx\|_0 \leq \|\bar{\bx}\|_0+s} Q(\bx)\\
  \leq& \inf_\eta Q(\bar{\bx} + \eta \bu) \\
  \leq& \inf_\eta [Q(\bar{\bx})+ \eta \nabla Q(\bar{\bx})^\top \bu + \rho_+(s) \eta^2 \|\bu\|_2^2 ] \\
  =& Q(\bar{\bx}) - |\nabla Q(\bar{\bx})^\top \bu|^2/ (4 \rho_+(s) \|\bu\|_2^2) .
\end{align*}
The result follows by rearranging the above inequality.
\end{proof}

The following theorem is the main result of this paper, which shows that OMP can approximately recover 
a sparse signal $\bar{\bx}$ in 2-norm if the condition (\ref{eq:rip-condition}) in the theorem
involving strong convexity constants can be satisfied. As we shall discuss later, this condition is
closely related to the RIP condition for the quadratic objective (\ref{eq:quad-opt}).
\begin{theorem}
  Consider the OMP algorithm.
  Let $\bar{\bx} \in \Real^d$ and $\bar{F}=\supp(\bar{\bx})$.
  If there exists $s$ such that 
  \begin{align}
   s \geq & |\bar{F} \cup F^{(0)}|  \nonumber \\
   & + 4 |\bar{F}\setminus F^{(0)}| 
   \frac{\rho_+(1)}{\rho_-(s)} \ln \frac{20 \rho_+(|\bar{F}\setminus F^{(0)}|)}{\rho_-(s)} , \label{eq:rip-condition}
 \end{align}
 then when $k=k_0=s-|\bar{F} \cup F^{(0)}|$, we have
  \[
  Q(\bx^{(k)}) \leq Q(\bar{\bx}) +  2.5 \epsilon_s(\bar{\bx})^2 /\rho_-(s) 
  \]
  and
  \[
  \|\bx^{(k)} - \bar{\bx}\|_2 \leq \sqrt{6} \epsilon_s(\bar{\bx}) /\rho_-(s) .
  \]
 \label{thm:omp_rip}
\end{theorem}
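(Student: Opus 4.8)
The plan is to split the argument into three parts: per-iteration progress estimates, their conversion into the objective-value bound after exactly $k_0$ steps, and finally the passage from the objective bound to the parameter bound. Throughout I write $\delta_k = Q(\bx^{(k)}) - Q(\bar{\bx})$ and $G^{(k)} = \bar{F}\setminus F^{(k)}$. Since each iteration enlarges the active set by at most one index, $|\bar{F}\cup F^{(k)}| \leq |\bar{F}\cup F^{(0)}| + k \leq s$ for every $k \leq k_0$, so the restricted strong convexity constants at level $s$ may be applied freely to every difference vector that arises; note also that $\delta_k$ is non-increasing (the supports $F^{(k)}$ are nested), so whenever some $\delta_k$ is already nonpositive both conclusions hold at once.

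First I would establish the two basic per-step inequalities. Because $\bx^{(k)}$ exactly minimizes $Q$ over $\supp(\bx)\subseteq F^{(k)}$, adding the greedily chosen index $j$ and expanding along $\be_j$ with the upper constant $\rho_+(1)$ gives the progress estimate $Q(\bx^{(k-1)}) - Q(\bx^{(k)}) \geq \max_i|\nabla Q(\bx^{(k-1)})_i|^2 / (4\rho_+(1))$. For the gradient I would use that $\nabla Q(\bx^{(k-1)})$ vanishes on $F^{(k-1)}$ by optimality, so that $\nabla Q(\bx^{(k-1)})^\top(\bar{\bx}-\bx^{(k-1)})$ reduces to the inner product over $G^{(k-1)}$; combining the lower bound $\rho_-(s)$ applied to $\Delta = \bar{\bx}-\bx^{(k-1)}$ with Cauchy--Schwarz and AM--GM gives, whenever $\delta_{k-1}\geq 0$, that $\max_i|\nabla Q(\bx^{(k-1)})_i| \geq 2\sqrt{\rho_-(s)\,\delta_{k-1}}/\sqrt{|G^{(k-1)}|}$. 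Since $|G^{(k-1)}| \leq |\bar{F}\setminus F^{(0)}|$, the two inequalities combine into the multiplicative decrease $\delta_k \leq (1 - \rho_-(s)/(\rho_+(1)\,|\bar{F}\setminus F^{(0)}|))\,\delta_{k-1}$.

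The main obstacle is the middle step: converting this decrease into $\delta_{k_0} \leq 2.5\,\epsilon_s(\bar{\bx})^2/\rho_-(s)$ after exactly $k_0 = s - |\bar{F}\cup F^{(0)}|$ iterations, with a budget growing only like $\ln(\rho_+(1)/\rho_-(s))$ rather than like the logarithm of the initial gap $\delta_0$ or of the signal's dynamic range. Merely iterating the multiplicative bound is insufficient (in the noiseless case $\epsilon_s(\bar{\bx})=0$ it could never force $\delta_{k_0}\leq 0$), because a greedy step may select an index outside $\bar{F}$, so $|G^{(k)}|$ need not decrease. The crux is therefore an amortized accounting of how many such off-support selections can occur: one must show that, measured against the $|\bar{F}\setminus F^{(0)}|$ target indices that must be captured, the number of wasted iterations before the gap reaches the floor $2.5\,\epsilon_s(\bar{\bx})^2/\rho_-(s)$ exceeds $|\bar{F}\setminus F^{(0)}|$ by at most the factor $4\,(\rho_+(1)/\rho_-(s))\,\ln(20\rho_+(|\bar{F}\setminus F^{(0)}|)/\rho_-(s))$, which is precisely the slack granted by condition (\ref{eq:rip-condition}). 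I expect this bookkeeping---grouping the target coordinates by magnitude scale and charging the per-step decrease against the capture of each group---to be the technical heart of the proof.

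Finally, the parameter bound follows from the objective bound. Applying the lower constant $\rho_-(s)$ at the pair $(\bar{\bx},\bx^{(k_0)})$ and controlling the linear term through $|\nabla Q(\bar{\bx})^\top(\bx^{(k_0)}-\bar{\bx})| \leq \epsilon_s(\bar{\bx})\,\|\bx^{(k_0)}-\bar{\bx}\|_2$ yields $\rho_-(s)\,r^2 \leq \delta_{k_0} + \epsilon_s(\bar{\bx})\,r$ with $r = \|\bx^{(k_0)}-\bar{\bx}\|_2$. Solving this quadratic inequality and inserting $\delta_{k_0}\leq 2.5\,\epsilon_s(\bar{\bx})^2/\rho_-(s)$ gives $r \leq \tfrac{1}{2}(1+\sqrt{11})\,\epsilon_s(\bar{\bx})/\rho_-(s) \leq \sqrt{6}\,\epsilon_s(\bar{\bx})/\rho_-(s)$, the second conclusion. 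The constants $2.5$ and $\sqrt{6}$ are matched exactly so that this final estimate closes.
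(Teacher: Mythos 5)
Your outer steps are sound: the per-step inequality $\delta_k \leq (1 - \rho_-(s)/(\rho_+(1)|\bar{F}\setminus F^{(0)}|))\,\delta_{k-1}$ is a correct specialization of what the paper proves as Lemma~\ref{lem:one-step}, and your final quadratic-inequality argument is a legitimate (slightly sharper) variant of the paper's Lemma~\ref{lem:param-est}. You also correctly diagnose why iterating the multiplicative bound cannot work. But the part you defer --- ``grouping the target coordinates by magnitude scale and charging the per-step decrease against the capture of each group,'' which you yourself call the technical heart --- is precisely the content of the paper's Lemma~\ref{lem:progress} and Lemma~\ref{lem:omp_rip}, and it is not a bookkeeping exercise that follows from what you have written; it requires two ideas absent from your proposal. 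First, the paper does not bound the number of ``wasted'' off-support selections at all. Instead it runs an induction on $m=|\bar{F}\setminus F^{(0)}|$: it sorts the target coefficients, forms dyadic blocks $\bar{F}_\ell$ of the $2^\ell-1$ largest ones, chooses the first level $L$ at which the tail energy drops by the factor $\mu = 10\rho_+(m)/\rho_-(s)$, and drives the objective gap down to roughly the energy of the $L$-th tail. Then --- this is the key mechanism --- it applies the parameter-estimation bound of Lemma~\ref{lem:param-est} \emph{in the middle} of the argument, not just at the end: once $Q(\bx^{(k)})-Q(\bar{\bx})$ is below that level, $\|\bx^{(k)}-\bar{\bx}\|_2^2 < \sum_{i\geq 2^{L-1}}\bar{\bx}_i^2$ forces the $2^{L-1}$ largest coordinates to lie in $F^{(k)}$, so the induction hypothesis applies with $F^{(k)}$ as the new initial set and at most $m-2^{L-1}$ uncaptured coordinates. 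Objective decrease is converted into support capture; nothing in your proposal supplies such a conversion.

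Second, even granting a capture mechanism, the specific per-step rate you derived is too weak to close the arithmetic. Your denominator is $|\bar{F}\setminus F^{(0)}| = m$ for every step, so capturing each block would cost $\Theta(m(\rho_+(1)/\rho_-(s))\ln(2\mu))$ iterations, and in the worst case (one coordinate captured per phase) the total count is quadratic in $m$, not the linear budget $4m(\rho_+(1)/\rho_-(s))\ln(20\rho_+(m)/\rho_-(s))$ that condition (\ref{eq:rip-condition}) grants. The paper avoids this by applying Lemma~\ref{lem:one-step} with the \emph{truncated} targets supported on $\bar{F}_\ell$ (whose approximation errors $q_\ell$, controlled by Lemma~\ref{lem:Q-bound}, act as the floors), so the rate for level $\ell$ is $\rho_-(s)/(\rho_+(1)|\bar{F}_\ell\setminus F^{(k)}|) \gtrsim \rho_-(s)/(\rho_+(1)2^\ell)$; the geometric sum over $\ell \leq L$ then costs only about $2^{L+1}(\rho_+(1)/\rho_-(s))\ln(2\mu)$ iterations, i.e., proportional to the $2^{L-1}$ coordinates captured, and the induction closes because $2^{L+1} + 4(m-2^{L-1}) \leq 4m$. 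So the gap is genuine: the amortized accounting you anticipate exists, but it hinges on intermediate truncated targets and mid-proof use of the $2$-norm bound, neither of which your plan contains.
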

\begin{proof}
The detailed proof relies on a number
of technical lemmas that are left to the appendix.

The first inequality of the theorem 
is a direct consequence of Lemma~\ref{lem:omp_rip}.
The second inequality is a consequence of the first inequality and Lemma~\ref{lem:param-est}:
\begin{align*}
&\rho_-(s) \|\bx^{(k)} - \bar{\bx}\|_2^2 \\
\leq&  2    \left[Q(\bx^{(k)}) - Q(\bar{\bx}) \right] + \epsilon_s(\bar{\bx})^2 /\rho_-(s) \\
\leq& 6 \epsilon_s(\bar{\bx})^2 /\rho_-(s) .
\end{align*}
This implies the second inequality.
\end{proof}

Note that (\ref{eq:rip-condition}) can be satisfied as long as $(\rho_+(1)/\rho_-(s))\ln (\rho_+(\bar{k})/\rho_-(s))$ 
grows sub-linearly as a function of $s$. With appropriate assumptions, this allows the ratio
$\rho_+(s)/\rho_-(s)$ to be significantly larger than $1$ but bounded from above (such a condition is sometimes
referred to as sparse eigenvalue condition in the statistics literature). In this context, 
Theorem~\ref{thm:omp_rip} is useful for estimation problems encountered in machine learning,
where $\rho_+(s)/\rho_-(s)$ may be large.

In compressed sensing, one can often control the ratio of $\rho_+(s)/\rho_-(s)$ to be not much larger
than $1$ using random projection. In this context, the following result gives a simpler interpretation of the above theorem, where 
the condition (\ref{eq:rip-condition}) of the theorem is replaced by 
$\rho_+(\bar{k}) \leq 2 \rho_-(31 \bar{k})$.
\begin{corollary}
  Consider the OMP algorithm with $F^{(0)}=\emptyset$.
  Let $\bar{\bx} \in \Real^d$ and $\bar{k}=\|\bar{\bx}\|_0$.
  If the condition $\rho_+(\bar{k}) \leq 2 \rho_-(31 \bar{k})$ holds, 
  then  when $k=k_0=30 \bar{k}$, we have 
 \[
  Q(\bx^{(k)}) \leq Q(\bar{\bx}) +  2.5 \epsilon_s(\bar{\bx})^2 /\rho_-(s) 
  \]
  and
  \[
  \|\bx^{(k)} - \bar{\bx}\|_2 \leq \sqrt{6} \epsilon_s(\bar{\bx}) /\rho_-(s) ,
  \]
  where $s=31 \bar{k}$. \label{cor:omp-rip}
\end{corollary}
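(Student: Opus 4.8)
The plan is to deduce the corollary directly from Theorem~\ref{thm:omp_rip} by verifying that the single choice $s = 31\bar{k}$ satisfies the hypothesis (\ref{eq:rip-condition}) under the assumption $\rho_+(\bar{k}) \leq 2\rho_-(31\bar{k})$. Once (\ref{eq:rip-condition}) is checked, the number of iterations prescribed by the theorem is $k_0 = s - |\bar{F} \cup F^{(0)}|$, which I will confirm equals $30\bar{k}$, so that the stated choice $k=k_0=30\bar{k}$ is consistent; the two displayed bounds are then exactly the conclusions of the theorem with this value of $s$.

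First I would specialize the combinatorial quantities to $F^{(0)} = \emptyset$. Then $\bar{F} \cup F^{(0)} = \bar{F}$ and $\bar{F} \setminus F^{(0)} = \bar{F}$, so $|\bar{F} \cup F^{(0)}| = |\bar{F} \setminus F^{(0)}| = \bar{k}$. Substituting into (\ref{eq:rip-condition}) and inserting $s = 31\bar{k}$, the condition to be verified becomes
\[
31\bar{k} \geq \bar{k} + 4\bar{k}\,\frac{\rho_+(1)}{\rho_-(31\bar{k})}\ln\frac{20\rho_+(\bar{k})}{\rho_-(31\bar{k})},
\]
which after subtracting $\bar{k}$ and dividing by $4\bar{k}$ is equivalent to the single scalar inequality
\[
\frac{\rho_+(1)}{\rho_-(31\bar{k})}\ln\frac{20\rho_+(\bar{k})}{\rho_-(31\bar{k})} \leq 7.5 .
\]

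To bound the left-hand side I would invoke two monotonicity facts that follow immediately from the definition of the restricted strong convexity constants: enlarging the sparsity level enlarges the admissible set of differences $\bx - \bx'$, so $\rho_+(\cdot)$ is nondecreasing and $\rho_-(\cdot)$ is nonincreasing. In particular $\rho_+(1) \leq \rho_+(\bar{k})$, and combined with the hypothesis $\rho_+(\bar{k}) \leq 2\rho_-(31\bar{k})$ this yields both $\rho_+(1)/\rho_-(31\bar{k}) \leq 2$ and $20\rho_+(\bar{k})/\rho_-(31\bar{k}) \leq 40$. Hence the left-hand side is at most $2\ln 40$, and the entire argument reduces to the numerical check $2\ln 40 \leq 7.5$, which holds since $\ln 40 \approx 3.689$.

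There is no substantive obstacle here: the corollary is a convenient repackaging of the theorem, and the only thing that must be gotten right is the constant bookkeeping — specifically, that the factor $20$ inside the logarithm, the leading $4$, and the slack $30$ in $s = 31\bar{k}$ are calibrated so that $2\ln 40$ clears $7.5$. If the hypothesis were only $\rho_+(\bar{k}) \leq c\,\rho_-(31\bar{k})$ for a larger $c$, or if the overhead $30\bar{k}$ were reduced, this final inequality would be the first thing to fail, so that is the step I would treat most carefully.
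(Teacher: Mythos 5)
Your proposal is correct and follows essentially the same route as the paper's own proof: set $s = 31\bar{k}$, use monotonicity of the restricted strong convexity constants together with the hypothesis $\rho_+(\bar{k}) \leq 2\rho_-(31\bar{k})$ to bound both $\rho_+(1)/\rho_-(s)$ and $\rho_+(\bar{k})/\rho_-(s)$ by $2$, and reduce the verification of (\ref{eq:rip-condition}) to the numerical inequality $8\ln 40 \leq 30$ (which you state equivalently as $2\ln 40 \leq 7.5$), after which the conclusions and the iteration count $k_0 = s - |\bar{F}| = 30\bar{k}$ follow directly from Theorem~\ref{thm:omp_rip}.
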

\begin{proof}
  If $\rho_+(\bar{k}) \leq 2 \rho_-(31 \bar{k})$ holds, 
  then we can let $s=31 \bar{k}$, which implies that 
 \[
  2 \geq \rho_+(\bar{k})/ \rho_-(s) \geq \rho_+(1)/\rho_-(s) .
  \]
  Therefore
  \begin{align*}
  s =&30 \bar{k} \geq \bar{k} + 4 \bar{k} \cdot 2 \ln (20 \cdot 2)\\
  \geq& \bar{k} + 4 \bar{k} (\rho_+(1)/\rho_-(s)) \ln (20 \rho_+(\bar{k})/\rho_-(s)) .
\end{align*}
This means that the condition (\ref{eq:rip-condition}) holds, and the corollary follows directly from
   Theorem~\ref{thm:omp_rip}.
\end{proof}

For the quadratic objective (\ref{eq:quad-opt}),
the condition $\rho_+(\bar{k}) \leq 2 \rho_-(31 \bar{k})$ is analogous to 
the RIP condition in \cite{CandTao05-rip}. 
In particular, if the matrix $A$ has the restricted isometry constant $\delta_{31\bar{k}} \leq 1/3$,
then the condition
$\rho_+(\bar{k}) \leq 4/3$ and $\rho_-(31\bar{k}) \geq 2/3$ holds, with 
$\rho_+(s)$ and $\rho_-(s)$ defined according to (\ref{eq:sparse-eig-cond}).
In this case, Corollary~\ref{cor:omp-rip} can be directly applied.

It is interesting to observe that  except for constants,
the result of this paper for OMP is as strong as those for more sophisticated greedy
algorithms such as ROMP \cite{NeeVer09-romp} or CoSaMP \cite{NeeTro08}.
For example, Corollary~\ref{cor:omp-rip} can be applied when $\delta_s \leq 1/3$ with
$s=31 \bar{k}$, while a similar result for CoSaMP in \cite{NeeTro08} applies when $\delta_s \leq 0.1$ with
$s=4 \bar{k}$. 
Nevertheless, the difference in the constants may still suggest possible advantages for more complex algorithms such as CoSaMP under suitable conditions.

For quadratic objective function, a simple instantiation of $\epsilon_s(\bar{\bx})$ using Proposition~\ref{prop:delta}
leads to the following sparse
recovery result that is relatively simple to interpret.
\begin{corollary}
  Consider the quadratic objective function $Q(\bx)= \|A \bx-\by\|_2^2$ of (\ref{eq:quad-opt}),
  and the OMP algorithm with $F^{(0)}=\emptyset$.
  Consider an arbitrary vector $\bar{\bx} \in \Real^d$ and let $\bar{k}=\|\bar{\bx}\|_0$.
  If the RIP condition $\rho_+(\bar{k}) \leq 2 \rho_-(31 \bar{k})$ holds, 
  then  when $k=k_0=30 \bar{k}$, we have 
 \[
  \|\bx^{(k)} - \bar{\bx}\|_2 \leq 2 \sqrt{6} \rho_+(s)^{1/2} \|A \bar{\bx}-\by\|_2 /\rho_-(s) ,
  \]
  where $s=31 \bar{k}$.
\label{cor:sparse-recovery-quad}
\end{corollary}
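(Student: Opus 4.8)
The plan is to combine Corollary~\ref{cor:omp-rip}, which already delivers a $2$-norm recovery bound in terms of the abstract quantity $\epsilon_s(\bar{\bx})$, with a specialization of $\epsilon_s(\bar{\bx})$ to the quadratic objective. Since $F^{(0)}=\emptyset$ and the hypothesis $\rho_+(\bar{k}) \leq 2\rho_-(31\bar{k})$ is exactly the condition required by Corollary~\ref{cor:omp-rip}, I would first invoke that corollary with $s=31\bar{k}$ and $k_0 = 30\bar{k}$ to obtain
\[
\|\bx^{(k)} - \bar{\bx}\|_2 \leq \sqrt{6}\,\epsilon_s(\bar{\bx})/\rho_-(s).
\]
It then remains only to bound $\epsilon_s(\bar{\bx})$ for $Q(\bx)=\|A\bx-\by\|_2^2$.

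The cleanest route is through the third part of Proposition~\ref{prop:delta}. Since $Q(\bx) \geq 0$ for every $\bx$, we have $\inf_{\|\bx\|_0 \leq \bar{k}+s} Q(\bx) \geq 0$, so the hypothesis $Q(\bar{\bx}) \leq \inf_{\|\bx\|_0 \leq \bar{k}+s} Q(\bx) + \bar{\epsilon}$ holds trivially with the choice $\bar{\epsilon} = Q(\bar{\bx}) = \|A\bar{\bx}-\by\|_2^2$. Proposition~\ref{prop:delta} then yields
\[
\epsilon_s(\bar{\bx}) \leq 2\sqrt{\rho_+(s)\,\bar{\epsilon}} = 2\,\rho_+(s)^{1/2}\,\|A\bar{\bx}-\by\|_2.
\]
Substituting this into the bound from Corollary~\ref{cor:omp-rip} immediately gives the claimed inequality $\|\bx^{(k)} - \bar{\bx}\|_2 \leq 2\sqrt{6}\,\rho_+(s)^{1/2}\|A\bar{\bx}-\by\|_2/\rho_-(s)$.

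As a sanity check one can derive the same estimate for $\epsilon_s(\bar{\bx})$ directly: for any $\bu$ with $\|\bu\|_0 \leq s$, the quadratic gradient $\nabla Q(\bar{\bx}) = 2A^\top(A\bar{\bx}-\by)$ gives $\nabla Q(\bar{\bx})^\top \bu = 2(A\bar{\bx}-\by)^\top A\bu$, and Cauchy--Schwarz together with the upper sparse-eigenvalue bound $\|A\bu\|_2 \leq \rho_+(s)^{1/2}\|\bu\|_2$ from (\ref{eq:sparse-eig-cond}) produces $|\nabla Q(\bar{\bx})^\top \bu| \leq 2\rho_+(s)^{1/2}\|A\bar{\bx}-\by\|_2\,\|\bu\|_2$, which is the same bound on $\epsilon_s(\bar{\bx})$ by Definition~\ref{def:grad}. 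I do not anticipate any genuine obstacle here: the corollary is a pure instantiation, and the only mildly non-obvious step is recognizing that the nonnegativity of $Q$ lets one take $\bar{\epsilon}=Q(\bar{\bx})$ in Proposition~\ref{prop:delta}.
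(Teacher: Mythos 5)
Your proof is correct and follows essentially the same route the paper intends: the paper gives no separate proof of this corollary, stating only that it follows by ``a simple instantiation of $\epsilon_s(\bar{\bx})$ using Proposition~\ref{prop:delta}'' combined with Corollary~\ref{cor:omp-rip}, which is exactly what you do. Your choice $\bar{\epsilon}=Q(\bar{\bx})$ in the third bound of Proposition~\ref{prop:delta} (justified by $Q\geq 0$), or equivalently the direct gradient computation you give as a sanity check, yields $\epsilon_s(\bar{\bx}) \leq 2\rho_+(s)^{1/2}\|A\bar{\bx}-\by\|_2$ and hence the stated constant $2\sqrt{6}$.
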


\section{Discussion}
In this paper we proved a new result for a generalization of the OMP algorithm.
It is shown that if the RIP is satisfied at sparsity level $O(\bar{k})$, then OMP can recover a $\bar{k}$-sparse signal in 2-norm. For compressed sensing applications, this result implies that
in order to uniformly recover a $\bar{k}$-sparse signal in $\Real^d$, only $n=O(\bar{k} \ln d)$ random projections are needed \cite{CandTao05-rip}. 

Our result for signal recovery is stronger than previous results for OMP that relied on different conditions. 
For example, \cite{Tropp04} considered the problem of recovering the support set of a sparse signal under
a stronger condition (also see \cite{Zhang08-forward} for recovery properties under stochastic noise).
A similar analysis was employed in \cite{TroppGilbert07},
where it was shown that for any fixed sparse signal $\bar{\bx}$ with $\bar{k}=\|\bar{\bx}\|_0$,
OMP can recover the signal with large probability
using $O(\bar{k}\ln d)$  measurements. A more refined analysis in \cite{FleRan-nips09} shows that
a lower bound of $n=2\bar{k}\ln (d-\bar{k})$ measurements is enough for recovery. 
However, the above results are not uniform with respect to all
$\bar{k}$-sparse signals $\bar{\bx}$ (that is, for any set of random projections,
there exist $\bar{k}$-sparsity signals that fail the analysis). In comparison, the RIP condition holds uniformly
by definition, and hence our result applies uniformly to all $\bar{k}$-sparse signals.
Although our result is stronger than previous results in terms of signal recovery in 2-norm, 
the result requires running the OMP algorithm for more than $\bar{k}$ iterations, and hence doesn't recover the true support set of the ideal signal. In comparison, results such as \cite{TroppGilbert07} also imply exact recovery of the correct support set (but under stronger assumptions) using only $\bar{k}$ OMP iterations.
It is also known that it is impossible to uniformly recover the support set (in $\bar{k}$ iterations) with the OMP algorithm 
with $O(\bar{k}\ln d)$  measurements \cite{Rauhut07}.
This means that it is necessary to run OMP for more than $\bar{k}$ iterations in order to achieve the best 2-norm recovery performance
with as few meausrements as possible.

It is worth mentioning that some previous results apply uniformly to all $\bar{k}$-sparse signals.
For example, results in \cite{DoElTe06} depend on the stronger mutual incoherence condition. 
Unfortunately the mutual incoherence condition can only be satisfied with $\Omega(\bar{k}^2 \ln d)$ random projections.
Therefore in recent years there have been significant interests in studying OMP under the RIP.
In addition to the current paper, a number of recent papers investigated this issue, 
reaching varying conclusions \cite{BecWoj10,DavWak10,LiuTem10-sup,Livshitz10}.
For example, the RIP-based analysis for sparse signals (but without noise) was considered in
\cite{DavWak10,LiuTem10-sup}, with the conclusion that under a sufficiently strong assumption on the RIP constant
(in fact, the resulting condition is similar to the mutual incoherence condition),
exact recovery is possible in $\bar{k}$ iterations. 
The condition required for the RIP constant was weakened in \cite{Livshitz10}, where 
the author showed that by running the OMP algorithm more than $\bar{k}$ iterations, it is possible to 
achieve exact recovery (again assuming no noise). The condition in \cite{Livshitz10}
can be satisfied with only $O(\bar{k}^{1.6} \ln d)$ measurements, which is a significant
improvement over the traditional $\Omega(\bar{k}^2 \ln d)$ measurements. 
The result obtained in the current paper is along the same line as \cite{Livshitz10},
but reduced the required number of measurements to the optimal order
of $O(\bar{k} \ln d)$.

It is also interesting to compare the new OMP result in this paper to that of Lasso, which is also known to work under the RIP. However,
a more refined comparison illustrates differences between the known theoretical results for these two methods. 
For OMP, the result in Theorem~\ref{thm:omp_rip} can be applied as long as the condition
\begin{align*}
  & s/|\bar{F} \cup F^{(0)}| \geq \\
  &\quad
   4 |\bar{F}\setminus F^{(0)}|(\rho_+(1)/\rho_-(s)) \ln (20 \rho_+(|\bar{F}\setminus F^{(0)}|)/\rho_-(s)) 
 \end{align*}
 is satisfied. With $F^{(0)}=\emptyset$, this roughly requires
$(\rho_+(1)/\rho_-(s)) \ln (\rho_+(\bar{k})/\rho_-(s))$ to grow sub-linearly as a function of $s$ in order to apply the theory.
In comparison, the known condition for Lasso (e.g., this has been made explicit in \cite{ZhangHuang06,Zhang07-l1}) requires $\rho_+(s)/\rho_-(s)$ to grow sub-linearly as a function of $s$.
To compare the two conditions, we note that the condition for OMP is weaker in terms of
of the upper convexity constant as there is no explicit dependency on $\rho_+(s)$;
however, the dependency on $\rho_-(s)$ is stronger in OMP than Lasso due to the logarithmic term.
Although it is unclear how tight these conditions are, the comparison nevertheless indicates that
even though both algorithms work under the RIP, there are still finer differences in their theoretical analysis: Lasso is slightly more favorable in terms of its dependency on the lower strong convexity constant, while OMP
is more favorable  in terms of its dependency on the upper strong convexity constant.
We further conjecture that the extra logarithmic dependency $\ln (\rho_+(\bar{k})/\rho_-(s))$ in OMP is necessary.
In practice, some times Lasso performs better while other times OMP performs better 
(for example, see experimental results in \cite{HuangZhang09:structured_sparsity}). 
Therefore some discrepancy in their theoretical analysis is expected. 
More specifically, for sparse recovery, 
one often observes that Lasso is superior when the nonzero coefficients have 
a similar magnitude (which happens to be the case that the extra $\ln (\rho_+(\bar{k})/\rho_-(s))$ factor
is required in our OMP analysis) while OMP performs better when the nonzero coefficients exhibit rapid
decay (which happens to be the case that the extra $\ln (\rho_+(\bar{k})/\rho_-(s))$ 
factor can be removed from our analysis).
The theory in this paper significantly narrows the previous theoretical gap between these two sparse recovery methods by positively answering the open question of whether OMP can recover sparse signals under the RIP. 
Therefore our result allows practitioners to apply OMP with more confidence than previously expected.

\section*{Acknowledgements}

The author would like to thank the anonymous referees for pointing out many relevant references and for
suggestions to improve the presentation.


\appendix

\section{Technical Lemmas}

We need a number of technical lemmas. 
Lemma~\ref{lem:one-step} and Lemma~\ref{lem:progress}, 
key to the proof, are based on earlier work of the author with collaborators \cite{ShSrZh09,HuangZhang09:structured_sparsity}.
The first three lemmas use the following notations.
Let $F,\bar{F}$ be two subsets of $\{1,\ldots,d\}$.
Let $\supp(\bar{\bx}) \subset \bar{F}$, and
\[
\bx = \arg\min_{\bz: \supp(\bz) \subset F} Q(\bz) .
\]

\begin{lemma} \label{lem:Q-bound}
  We have
  \[
  Q(\bx) - Q(\bar{\bx}) \leq 1.5 \rho_+(s) \| \bar{\bx}_{\bar{F}\setminus F}\|_2^2  + 
 0.5 \epsilon_s(\bar{\bx})^2/\rho_+(s) 
  \]
  for all $s \geq |\bar{F}\setminus F|$.
\end{lemma}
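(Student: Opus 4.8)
The plan is to compare $Q(\bx)$ with the value of $Q$ at a carefully chosen competitor supported on $F$, and then invoke the upper restricted strong convexity bound together with the restricted gradient optimal constant. The natural competitor is the truncation of $\bar{\bx}$ to $F$: let $\bw$ denote the vector that agrees with $\bar{\bx}$ on $F$ and vanishes outside $F$, so that $\supp(\bw) \subset F \cap \bar{F} \subset F$ and $\bw - \bar{\bx} = -\bar{\bx}_{\bar{F}\setminus F}$. Since $\bx$ minimizes $Q$ over all vectors supported on $F$, optimality gives $Q(\bx) \leq Q(\bw)$, so it suffices to bound $Q(\bw) - Q(\bar{\bx})$.

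First I would apply the upper bound in the definition of the restricted strong convexity constants. Because $\|\bw - \bar{\bx}\|_0 = |\bar{F}\setminus F| \leq s$, this yields
\[
Q(\bw) - Q(\bar{\bx}) \leq \nabla Q(\bar{\bx})^\top(\bw - \bar{\bx}) + \rho_+(s) \|\bw - \bar{\bx}\|_2^2 .
\]
Next I would control the linear term using Definition~\ref{def:grad}. Since $\bw - \bar{\bx}$ is $s$-sparse, the restricted gradient optimal constant gives $|\nabla Q(\bar{\bx})^\top(\bw - \bar{\bx})| \leq \epsilon_s(\bar{\bx}) \|\bar{\bx}_{\bar{F}\setminus F}\|_2$, where I have used $\|\bw - \bar{\bx}\|_2 = \|\bar{\bx}_{\bar{F}\setminus F}\|_2$.

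Combining these and writing $a = \|\bar{\bx}_{\bar{F}\setminus F}\|_2$, the estimate reads $Q(\bx) - Q(\bar{\bx}) \leq \epsilon_s(\bar{\bx}) a + \rho_+(s) a^2$. The desired inequality then follows from Young's inequality applied to the cross term, namely $\epsilon_s(\bar{\bx}) a \leq \tfrac{1}{2}\rho_+(s) a^2 + \tfrac{1}{2}\epsilon_s(\bar{\bx})^2/\rho_+(s)$, which converts the bound into $\tfrac{3}{2}\rho_+(s) a^2 + \tfrac{1}{2}\epsilon_s(\bar{\bx})^2/\rho_+(s)$, as claimed. No step presents a genuine obstacle; the only care needed is to choose $\bw$ as the truncation of $\bar{\bx}$ to $F$ (rather than some other feasible point), so that the perturbation $\bw - \bar{\bx}$ equals exactly $-\bar{\bx}_{\bar{F}\setminus F}$, keeping it $s$-sparse and producing precisely the $\|\bar{\bx}_{\bar{F}\setminus F}\|_2$ factor on the right-hand side.
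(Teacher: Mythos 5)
Your proposal is correct and follows essentially the same route as the paper's own proof: the paper also takes the competitor $\bar{\bx}_{\bar{F}\cap F}$ (identical to your truncation $\bw$, since $\supp(\bar{\bx})\subset\bar{F}$), bounds $Q(\bx)\leq Q(\bx')$ by optimality, splits off the gradient term via the upper restricted strong convexity constant and $\epsilon_s(\bar{\bx})$, and finishes with the same Young-type inequality $ab\leq 0.5a^2+0.5b^2$. No differences of substance.
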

\begin{proof}
 Let $\bx'= \bar{\bx}_{\bar{F} \cap F}$, then by the definition of $\bx$, we know that
  $Q(\bx) \leq Q(\bx')$. Therefore
  \begin{align*}
  &Q(\bx) - Q(\bar{\bx}) \\
  \leq&
  Q(\bx') - Q(\bar{\bx}) \\
  =&
  Q(\bx') - Q(\bar{\bx}) - \nabla Q(\bar{\bx})^\top (\bx'-\bar{\bx}) 
  + \nabla Q(\bar{\bx})^\top (\bx'-\bar{\bx}) \\
  \leq & \rho_+(s) \| \bar{\bx}_{\bar{F}\setminus F}\|_2^2  + \epsilon_s(\bar{\bx}) \|\bar{\bx}_{\bar{F} \setminus F}\|_2 \\
  \leq & \rho_+(s) \| \bar{\bx}_{\bar{F}\setminus F}\|_2^2  + 0.5\epsilon_s(\bar{\bx})^2/\rho_+(s) 
  + 0.5 \rho_+(s) \|\bar{\bx}_{\bar{F} \setminus F}\|_2^2 ,
\end{align*}
which implies the lemma. The first inequality is by the definitions of $\rho_+(s)$ and $\epsilon_s(\bar{\bx})$.
The last inequality follows from the fact that 
$ab \leq 0.5 a^2 + 0.5 b^2$ with $a=\epsilon_s(\bar{\bx})/\sqrt{\rho_+(s)}$ and 
$b=\sqrt{\rho_+(s)} \|\bar{\bx}_{\bar{F} \setminus F}\|_2$.
\end{proof}

\begin{lemma} \label{lem:param-est}
We have:
\[
\rho_-(s) \|\bx - \bar{\bx}\|_2^2 \leq  2    \left[Q(\bx) - Q(\bar{\bx}) \right] + \epsilon_s(\bar{\bx})^2 /\rho_-(s) 
\]
for all $s \geq |F\cup\bar{F}|$.
\end{lemma}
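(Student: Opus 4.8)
The plan is to expand $Q$ about the target $\bar{\bx}$ using the lower restricted strong convexity bound, control the resulting first-order term with the restricted gradient optimal constant $\epsilon_s(\bar{\bx})$, and then absorb a cross term via the elementary inequality $ab \leq \tfrac{1}{2}\rho_-(s) a^2 + \tfrac{1}{2} b^2/\rho_-(s)$. The enabling observation is purely about supports: since $\supp(\bx) \subset F$ and $\supp(\bar{\bx}) \subset \bar{F}$, the displacement obeys $\supp(\bx - \bar{\bx}) \subset F \cup \bar{F}$, hence $\|\bx - \bar{\bx}\|_0 \leq |F \cup \bar{F}| \leq s$. This single sparsity bound is exactly what licenses me to apply both the constant $\rho_-(s)$ and the constant $\epsilon_s(\bar{\bx})$ to the vector $\bx - \bar{\bx}$.

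Next I would invoke the lower inequality in the Restricted Strong Convexity definition, taking the base point of the expansion to be $\bar{\bx}$ and the displacement to be $\bx - \bar{\bx}$. This gives
\[
\rho_-(s) \|\bx - \bar{\bx}\|_2^2 \leq Q(\bx) - Q(\bar{\bx}) - \nabla Q(\bar{\bx})^\top (\bx - \bar{\bx}) .
\]
By Definition~\ref{def:grad}, applied to $\bu = \bx - \bar{\bx}$ (which is admissible since $\|\bu\|_0 \leq s$), the first-order term is bounded by $|\nabla Q(\bar{\bx})^\top (\bx - \bar{\bx})| \leq \epsilon_s(\bar{\bx}) \|\bx - \bar{\bx}\|_2$, so the right-hand side is at most $Q(\bx) - Q(\bar{\bx}) + \epsilon_s(\bar{\bx}) \|\bx - \bar{\bx}\|_2$.

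Finally I would split the cross term with $a = \|\bx - \bar{\bx}\|_2$ and $b = \epsilon_s(\bar{\bx})$, producing a $\tfrac{1}{2}\rho_-(s)\|\bx - \bar{\bx}\|_2^2$ contribution that I move to the left-hand side; this halves the coefficient of $\|\bx - \bar{\bx}\|_2^2$, and multiplying through by $2$ yields the stated bound. There is no serious obstacle in this argument. The only point genuinely requiring care is the choice of expansion point: one must expand about $\bar{\bx}$ rather than about the minimizer $\bx$, because it is $\nabla Q(\bar{\bx})$ — and not $\nabla Q(\bx)$ — that is controlled by $\epsilon_s(\bar{\bx})$. Notably, the argument never uses the minimality of $\bx$, so the inequality in fact holds for any $\bx$ supported on $F$.
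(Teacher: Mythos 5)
Your proof is correct and is essentially the paper's own argument: both expand $Q$ about $\bar{\bx}$ via the lower restricted strong convexity bound, control $\nabla Q(\bar{\bx})^\top(\bx-\bar{\bx})$ by $\epsilon_s(\bar{\bx})\|\bx-\bar{\bx}\|_2$ (valid since $\|\bx-\bar{\bx}\|_0 \leq |F\cup\bar{F}| \leq s$), and absorb the cross term by $ab \leq \tfrac12 a^2 + \tfrac12 b^2$ before rearranging. Your closing remark is also accurate --- the paper's proof never invokes the minimality of $\bx$ either, so the lemma indeed holds for any $\bx$ supported on $F$.
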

\begin{proof}
From
 \begin{align*}
 &Q(\bx) - Q(\bar{\bx}) \\
  =&Q(\bx) - Q(\bar{\bx}) - \nabla Q(\bar{\bx})^\top (\bx-\bar{\bx}) + \nabla Q(\bar{\bx})^\top (\bx-\bar{\bx}) \\
  \geq& \rho_-(s)\|\bar{\bx}-\bx\|_2^2 - \epsilon_s(\bar{\bx}) \|\bar{\bx}-\bx\|_2 \\
  \geq& 0.5 \rho_-(s)\|\bar{\bx}-\bx\|_2^2 - 0.5 \epsilon_s(\bar{\bx})^2/\rho_-(s) ,
\end{align*}
we obtain the desired inequality.
The first inequality is by the definitions of $\rho_-(s)$ and $\epsilon_s(\bar{\bx})$.
The last inequality again follows from the fact that 
$ab \leq 0.5 a^2 + 0.5 b^2$ with $a=\epsilon_s(\bar{\bx})/\sqrt{\rho_+(s)}$ and 
$b=\sqrt{\rho_+(s)} \|\bar{\bx}_{\bar{F} \setminus F}\|_2$.
\end{proof}

The next lemma shows that each greedy search makes reasonable progress.
This proof is essentially identical to a similar result in
\cite{ShSrZh09} but with refined notations used in the current paper.
We thus include the proof for completeness. It allows the readers to verify more easily that the proof in \cite{ShSrZh09} remains unchanged with our new definitions.
\begin{lemma} \label{lem:one-step}
Let $\be_i \in \Real^d$ be the vector of zeros except for the $i$-th component being one. 
If $\bar{F}\setminus F \neq \emptyset$, then
for all $s \geq |F\cup\bar{F}|$:
\begin{align*}
&\min_{\alpha} Q(\bx + \alpha \be_{j}) \\
\leq& Q(\bx) -
\frac{\rho_-(s) \|\bx-\bar{\bx}\|^2}{ \rho_+(1)
\left(\sum_{i \in \bar{F}\setminus F} |\bar{\bx}_i|\right)^2} \max(0,Q(\bx)-Q(\bar{\bx})) ,
\end{align*}
where $j=  \arg\max_{i} |\nabla Q(\bx)_i|$.
\end{lemma}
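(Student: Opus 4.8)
The plan is to bound the one-step progress from above using the upper strong convexity constant $\rho_+(1)$, and then to lower bound the decisive gradient coordinate using the lower strong convexity constant $\rho_-(s)$ together with the first-order optimality of $\bx$. First I would record that, since $\bx$ minimizes $Q$ over vectors supported on $F$, the first-order condition gives $\nabla Q(\bx)_i = 0$ for every $i \in F$; in particular $\|\nabla Q(\bx)\|_\infty = |\nabla Q(\bx)_j|$ by the choice of $j$. Applying the upper bound in the definition of $\rho_+$ to the one-sparse perturbation $\alpha \be_j$ (for which $\|\alpha\be_j\|_0 \le 1$) and minimizing the resulting quadratic in $\alpha$ yields
\[
\min_\alpha Q(\bx + \alpha \be_j) \leq Q(\bx) - \frac{|\nabla Q(\bx)_j|^2}{4\rho_+(1)} = Q(\bx) - \frac{\|\nabla Q(\bx)\|_\infty^2}{4\rho_+(1)} .
\]

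It then remains to lower bound $\|\nabla Q(\bx)\|_\infty$, and the idea is to test the gradient against the direction $\bar{\bx} - \bx$. Because $\nabla Q(\bx)$ vanishes on $F$ while $\bx$ vanishes off $F$, the inner product $\nabla Q(\bx)^\top(\bar{\bx} - \bx)$ collapses to a sum over $\bar{F}\setminus F$, so the triangle inequality gives $|\nabla Q(\bx)^\top(\bar{\bx}-\bx)| \le \|\nabla Q(\bx)\|_\infty \sum_{i \in \bar{F}\setminus F} |\bar{\bx}_i|$. Separately, since $\|\bar{\bx}-\bx\|_0 \le |F \cup \bar{F}| \le s$, the lower bound in the definition of $\rho_-(s)$ rearranges to
\[
-\nabla Q(\bx)^\top(\bar{\bx}-\bx) \geq \rho_-(s)\|\bar{\bx}-\bx\|_2^2 + \left(Q(\bx) - Q(\bar{\bx})\right) .
\]

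The decisive step, and the one I expect to be the main obstacle, is converting this \emph{additive} lower bound into the \emph{multiplicative} form appearing in the statement. In the nontrivial case $Q(\bx) \ge Q(\bar{\bx})$, both summands on the right are nonnegative, so the left side is nonnegative and the elementary inequality $(a+b)^2 \ge 4ab$ with $a = \rho_-(s)\|\bar{\bx}-\bx\|_2^2$ and $b = Q(\bx)-Q(\bar{\bx})$ gives
\[
|\nabla Q(\bx)^\top(\bar{\bx}-\bx)|^2 \geq 4\,\rho_-(s)\|\bar{\bx}-\bx\|_2^2\,\bigl(Q(\bx)-Q(\bar{\bx})\bigr) ;
\]
when $Q(\bx) < Q(\bar{\bx})$ the claimed right-hand side vanishes and the bound holds trivially with $\alpha = 0$, so the two cases merge into the factor $\max(0, Q(\bx)-Q(\bar{\bx}))$. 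Finally I would chain the three displays: substituting the squared-inner-product estimate into the bound for $\|\nabla Q(\bx)\|_\infty^2$, and that in turn into the one-step progress bound, the factors of $4$ cancel and the stated inequality is recovered exactly.
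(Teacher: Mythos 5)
Your proof is correct and follows essentially the same route as the paper's: a one-coordinate quadratic descent bound via $\rho_+(1)$, the optimality of $\bx$ over $F$ to collapse $\nabla Q(\bx)^\top(\bar{\bx}-\bx)$ onto $\bar{F}\setminus F$, the definition of $\rho_-(s)$, and the elementary inequality $(a+b)^2 \geq 4ab$ at exactly the same point. The only difference is one of packaging: you isolate $\|\nabla Q(\bx)\|_\infty$ explicitly via the $\ell_1$--$\ell_\infty$ H\"older bound, whereas the paper reaches the identical inequality by taking a $|\bar{\bx}_i|$-weighted average of the surrogate functions $Q_i(\eta)$ and using that the selected coordinate $j$ minimizes $\min_i \min_\eta Q_i(\eta)$.
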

\begin{proof}
For all $i \in \{1,\ldots,d\}$ and $\eta > 0 $, we define
\[
 Q_i(\eta)=Q(\bx) + \eta\,\sgn(\bar{\bx}_i)\,\nabla Q(\bx)_i + \eta^2\,\rho_+(1) .
\]
It follows from the definition of $\rho_+(1)$ that
$$
\min_{\alpha} Q(\bx + \alpha \be_{j}) ~\leq~ 
Q(\bx + \eta\,\sgn(\bar{\bx}_j)\,\be_{j}) ~\leq~ 
Q_j(\eta) .
$$
Since the choice of $j=  \arg\max_{i} |\nabla Q(\bx)_i|$ achieves the
minimum of $\min_i \min_\eta Q_i(\eta)$, the lemma is a direct consequence
of the following stronger statement:
\begin{align}
&\min_i Q_i(\eta) \label{eqn:plem:1}\\
\leq&
Q(\bx) -
\frac{\max\left(0,Q(\bx)-Q(\bar{\bx}) + \rho_-(s) \|\bx-\bar{\bx}\|^2\right)^2}{ 4 \rho_+(1)
\left(\sum_{i \in \bar{F}\setminus F} |\bar{\bx}_i|\right)^2} , \nonumber
\end{align}
with an appropriate choice of $\eta$; this is because
\begin{align*}
&\max\left(0,Q(\bx)-Q(\bar{\bx})+ \rho_-(s) \|\bx-\bar{\bx}\|^2\right)^2\\
\geq& 4 \rho_-(s)  \max(0,Q(\bx)-Q(\bar{\bx})) \|\bx-\bar{\bx}\|^2 .
\end{align*}
Therefore, we now turn to
prove that \eqref{eqn:plem:1} holds.
Denoting $u = \sum_{i \in \bar{F}\setminus F} |\bar{\bx}_i|$, we obtain that
\begin{align} \label{eqn:plem:2}
u \, \min_i Q_i(\eta)  &\leq~ 
\sum_{i\in \bar{F}\setminus F} |\bar{\bx}_i| Q_i(\eta)  \\ \nonumber
&\leq~ 
u \,Q(\bx) + \eta\,\sum_{i \in \bar{F}\setminus F} \bar{\bx}_i \, \nabla Q(\bx)_i + 
u \,\rho_+(1) \eta^2 .
\end{align}
Since we assume that $\bx$ is optimal over $F$, we get that
$\nabla Q(\bx)_i = 0$ for all $i \in F$. Additionally, $\bx_i = 0$
for $i \not\in F$ and $\bar{\bx}_i = 0$ for $i \not\in \bar{F}$. Therefore,
\begin{align*}
\sum_{i \in \bar{F}\setminus F} \bar{\bx}_i \, \nabla Q(\bx)_i &= 
\sum_{i \in \bar{F}\setminus F} (\bar{\bx}_i-\bx_i) \, \nabla Q(\bx)_i \\ 
&= 
\sum_{i \in \bar{F} \cup F} (\bar{\bx}_i-\bx_i) \, \nabla Q(\bx)_i \\
&= \nabla Q(\bx)^\top (\bar{\bx} - \bx)  ~~. 
\end{align*}
Combining the above with the definition of $\rho_-(s)$, we obtain that
\[
\sum_{i \in \bar{F}\setminus F} \bar{\bx}_i \, \nabla Q(\bx)_i 
\leq
Q(\bar{\bx})-Q(\bx) - \rho_-(s) \|\bx-\bar{\bx}\|_2^2 ~. 
\]
Combining the above with \eqref{eqn:plem:2} we get
\begin{align*}
&u\,\min_{i} Q_i(\eta) \\
\leq&
u \,Q(\bx) + \eta\,
[ Q(\bar{\bx}) - Q(\bx)- \rho_-(s) \|\bx-\bar{\bx}\|_2^2 ] 
+ u \,\rho_+(1) \eta^2 .
\end{align*}
Setting 
\[
\eta = \max[0,Q(\bx)-Q(\bar{\bx}) + \rho_-(s) \|\bx-\bar{\bx}\|_2^2 ] /(2 u \rho_+(1))
\]
and rearranging the terms, we conclude our proof of 
(\ref{eqn:plem:1}).
\end{proof}

The direct consequence of the previous lemma is the following result, which is critical
in our analysis. The idea of using a nesting approximating sequence has appeared
in \cite{HuangZhang09:structured_sparsity}, but the current version is improved. The change is necessary for the purpose of this paper. In the following $\mu$ can be chosen as any positive number if $L=1$.
\begin{lemma} \label{lem:progress}
  Consider the OMP algorithm.
  Consider a positive integer $L$ and subsets
  $\bar{F}_0 \subset \bar{F}_1 \subset \bar{F}_2 \cdots \bar{F}_L \subset \bar{F} \cup F^{(0)}$, 
  where $\bar{F}_{0}= \bar{F} \cap F^{(0)}$.
  Assume that 
  $\min_{\bx: \supp(\bx) \subset \bar{F}_j} Q(\bx) \leq Q(\bar{\bx}) + q_j$ ($j=0,\ldots,L$), $q_0 \geq q_1 \geq \cdots \geq q_{L} \geq 0$, and
  let $\mu \geq \sup_{j=1,\ldots,L-1} (q_{j-1}/q_j)$.
  If $s \geq |F^{(k)} \cup \bar{F}|$ and
  \[
  k =\sum_{j=1}^{L} \left\lceil  |\bar{F}_{j}\setminus F^{(0)}| (\rho_+(1)/\rho_-(s)) \ln (2\mu) \right\rceil ,
 \]
  then 
 \[
  Q(\bx^{(k)}) \leq Q(\bar{\bx}) + q_L + \mu^{-1} q_{L-1}.
  \]
\end{lemma}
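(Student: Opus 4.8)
The plan is to split the $k$ iterations into $L$ consecutive stages, the $j$-th stage consisting of $k_j = \lceil |\bar{F}_j\setminus F^{(0)}|(\rho_+(1)/\rho_-(s))\ln(2\mu)\rceil$ steps so that $k=\sum_{j=1}^L k_j$, and to measure progress in stage $j$ against the \emph{moving} target $\bw^{(j)} := \arg\min_{\supp(\bx)\subset \bar{F}_j} Q(\bx)$. Writing $R_j := Q(\bw^{(j)})-Q(\bar{\bx})$, the hypotheses give $R_j \le q_j$, while the nesting $\bar{F}_{j-1}\subset \bar{F}_j$ forces $R_0\ge R_1\ge\cdots\ge R_L$. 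Let $P_m=Q(\bx^{(m)})$ denote the objective after $m$ OMP steps, and let $\hat E_j := \max(0, P_{m_j}-Q(\bw^{(j)}))$ be the excess at the end of stage $j$, where $m_j=k_1+\cdots+k_j$.

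First I would establish a per-step geometric contraction towards the current stage target. Applying Lemma~\ref{lem:one-step} with target $\bw^{(j)}$ (support $\subset \bar{F}_j$) and the current iterate $\bx$ (always the minimizer over its feature set $F\supseteq F^{(0)}$, since OMP is fully corrective), then bounding $(\sum_{i\in\bar{F}_j\setminus F}|\bw^{(j)}_i|)^2 \le |\bar{F}_j\setminus F|\,\|\bx-\bw^{(j)}\|_2^2 \le |\bar{F}_j\setminus F^{(0)}|\,\|\bx-\bw^{(j)}\|_2^2$ by Cauchy--Schwarz together with $\bar{F}_j\setminus F\subseteq \bar{F}_j\setminus F^{(0)}$, shows that one step multiplies $\max(0,Q(\cdot)-Q(\bw^{(j)}))$ by a factor at most $1-\rho_-(s)/(\rho_+(1)|\bar{F}_j\setminus F^{(0)}|)$. (If $\bar{F}_j\subseteq F$ at some step the excess is already $\le 0$ and full correction keeps it so, so the bound holds trivially; the condition $s\ge |F\cup\bar{F}_j|$ needed by Lemma~\ref{lem:one-step} holds throughout because the feature sets are nested and $s\ge |F^{(k)}\cup\bar{F}|$.) Iterating $k_j$ steps and using $1-x\le e^{-x}$ with the defining choice of $k_j$ (so that the exponent reaches $\ln(2\mu)$) yields the clean per-stage bound $\hat E_j \le \frac{1}{2\mu}\max(0,P_{m_{j-1}}-Q(\bw^{(j)}))$.

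Next I would convert this into a cross-stage recursion. With $g_j := Q(\bw^{(j-1)})-Q(\bw^{(j)}) = R_{j-1}-R_j\ge 0$ and $P_{m_{j-1}}-Q(\bw^{(j)}) = (P_{m_{j-1}}-Q(\bw^{(j-1)}))+g_j \le \hat E_{j-1}+g_j$, the contraction becomes $\hat E_j \le \frac{1}{2\mu}(\hat E_{j-1}+g_j)$. The base case is $\hat E_0=0$, because $\bx^{(0)}$ minimizes $Q$ over $F^{(0)}\supseteq \bar{F}_0$ and hence $P_0\le Q(\bw^{(0)})$. Unrolling gives $\hat E_L \le \sum_{j=1}^L (2\mu)^{-(L-j+1)} g_j$, and therefore $Q(\bx^{(k)})-Q(\bar{\bx}) = (P_{m_L}-Q(\bw^{(L)}))+R_L \le \hat E_L + R_L$.

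The remaining, and hardest, task is to show $\hat E_L+R_L \le q_L+\mu^{-1}q_{L-1}$. Here I would substitute $g_j=R_{j-1}-R_j$, perform an Abel rearrangement so that the geometrically increasing weights $w_j=(2\mu)^{-(L-j+1)}$ get differenced into nonnegative coefficients, and then invoke the regularity $q_{j-1}\le\mu q_j$ ($j\le L-1$), which chains to $R_i\le q_i\le \mu^{L-1-i}q_{L-1}$, to control each coefficient times $q_i$. I expect the main obstacle to be twofold. One must genuinely exploit the geometric decay of the weights, since the crude bound $w_j\le w_L$ loses a factor $\mu^{L-1}$ and fails. And one must keep the term $R_L=Q(\bw^{(L)})-Q(\bar{\bx})$ attached to $\hat E_L$ rather than bounding $R_L$ away: because $\bar{\bx}$ need not be feasible for the stage-$L$ problem, $R_L$ may be negative, and it is precisely the pairing of $(1-w_L)R_L$ with the weighted sum of the $g_j$ that prevents the estimate from blowing up in that case. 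Once these two points are handled, the use of $2\mu\ge 2$ reduces the bound to elementary inequalities that collapse to the claimed $q_L+\mu^{-1}q_{L-1}$.
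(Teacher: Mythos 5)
Your proposal is correct, and it shares the paper's core engine: the same stage decomposition with $k_j=\lceil |\bar{F}_j\setminus F^{(0)}|(\rho_+(1)/\rho_-(s))\ln(2\mu)\rceil$ steps per stage, the same application of Lemma~\ref{lem:one-step} to the moving target $\bw^{(j)}=\arg\min_{\supp(\bx)\subset\bar{F}_j}Q(\bx)$ with the Cauchy--Schwarz bound $\bigl(\sum_{i\in\bar{F}_j\setminus F}|\bw^{(j)}_i|\bigr)^2\le |\bar{F}_j\setminus F^{(0)}|\,\|\bx-\bw^{(j)}\|_2^2$, and the same $(2\mu)^{-1}$ contraction per stage. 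Where you genuinely diverge is in how the stages are chained. The paper runs an induction on $L$ and always measures the excess against the \emph{fixed} level $Q(\bar{\bx})+q_\ell$, i.e.\ it tracks $\max(0,Q(\bx^{(\cdot)})-Q(\bar{\bx})-q_\ell)$; this keeps every tracked quantity nonnegative (so the sign of $Q(\bw^{(\ell)})-Q(\bar{\bx})$ never matters), uses the regularity hypothesis only once per inductive step (via $\mu^{-1}q_{L-2}\le q_{L-1}$, giving start-of-stage excess $\le 2q_{L-1}$, contracted once to $\mu^{-1}q_{L-1}$), and needs no summation identities. You instead track the excess over the moving optimum value, obtain the recursion $\hat E_j\le (2\mu)^{-1}(\hat E_{j-1}+g_j)$, unroll it, and close with Abel summation plus the chained regularity $q_i\le \mu^{L-1-i}q_{L-1}$. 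I verified that your route does close: after rearrangement the coefficients $w_1$, $w_{i+1}-w_i$, and $1-w_L$ are nonnegative (using $2\mu\ge 1$), so $R_i\le q_i$ can be substituted safely even when some $R_i<0$, and the geometric sums collapse to $q_L+(1-2^{-L})\mu^{-1}q_{L-1}$, marginally sharper than claimed. Two small caveats: your appeal to $2\mu\ge 2$ is justified only for $L\ge 2$ (where $\mu\ge q_0/q_1\ge 1$); for $L=1$ the lemma allows any $\mu>0$, and when $\mu<1/2$ your Abel coefficients go negative --- though that case is degenerate since then $\ln(2\mu)\le 0$ forces $k\le 0$ and the claim holds trivially from $Q(\bx^{(0)})\le Q(\bar{\bx})+q_0\le Q(\bar{\bx})+q_1+\mu^{-1}q_0$. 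In short, the paper's induction buys simplicity (no signed bookkeeping, no Abel rearrangement), while your explicit unrolling buys a slightly better constant and makes the role of the geometric weights transparent.
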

\begin{proof} 
Note that for any $\supp(\bx) \subset F$ and $\supp(\bar{\bx}) \subset \bar{F}$, we have
when $\bar{F}\setminus F \neq \emptyset$:
\[
\frac{\rho_-(s) \|\bx-\bar{\bx}\|^2}{ \rho_+(1)
\left(\sum_{i \in \bar{F}\setminus F} |\bar{\bx}_i|\right)^2}
\geq 
\frac{\rho_-(s)}{ \rho_+(1) |\bar{F}\setminus F|} .
\]
Therefore Lemma~\ref{lem:one-step} implies that at any $k$ such that $s \geq |F^{(k)} \cup \bar{F}|$
and $\ell=0,\ldots,L$, we have either $|\bar{F}_\ell \setminus F^{(k)}|=0$ or
  \begin{align*}
 Q(\bx^{(k+1)}) \leq& Q(\bx^{(k)}) -\\
  &\frac{\rho_-(s) }{ \rho_+(1) |\bar{F}_\ell \setminus F^{(k)}|}  \max\left(0, Q(\bx^{(k)})-Q(\bar{\bx})-q_\ell\right) ,
\end{align*}
where we simply replace the target vector $\bar{\bx}$ in  Lemma~\ref{lem:one-step} by the 
  optimal solution over $\bar{F}_\ell$, and replace $\bx$ by $\bx^{(k)}$.
  The inequality, along with $Q(\bx^{(k+1)}) \leq Q(\bx^{(k)})$,
  implies that either $|\bar{F}_\ell \setminus F^{(k)}|=0$ or
  \begin{align*}
 & \max(0,Q(\bx^{(k+1)})-Q(\bar{\bx})-q_\ell) \\
~\leq&~ 
  \left[1-\frac{\rho_-(s) }{ \rho_+(1) |\bar{F}_\ell \setminus F^{(k)}|}\right]  
\max\left(0, Q(\bx^{(k)})-Q(\bar{\bx})-q_\ell\right) \\
\leq&  \exp \left[-\frac{\rho_-(s) }{ \rho_+(1) |\bar{F}_\ell \setminus F^{(k)}|}\right]  
\max\left(0, Q(\bx^{(k)})-Q(\bar{\bx})-q_\ell\right) .
\end{align*}
Therefore for any $k' \leq k$ and $\ell=1,\ldots,L$, we have
either $|\bar{F}_\ell \setminus F^{(k)}|=0$ or
\begin{align}
&  Q(\bx^{(k)})-Q(\bar{\bx})-q_\ell ~\leq~  \label{eq:proof-multi-steps} \\
&\exp \left[- \frac{\rho_-(s) (k-k')}{ \rho_+(1) |\bar{F}_\ell \setminus F^{(k')}|}\right]  
\max\left(0, Q(\bx^{(k')})-Q(\bar{\bx})-q_\ell\right) .\nonumber
\end{align}

We are now ready to prove the lemma by induction on $L$.
If $L=1$, we can set $k'=0$ in 
(\ref{eq:proof-multi-steps}) and consider any $\mu>0$.
Since
$Q(\bx^{(0)}) \leq \min_{\bx: \supp(\bx) \subset \bar{F}_0} Q(\bx) \leq Q(\bar{\bx}) + q_0$, 
we have
\[
Q(\bx^{(0)})-Q(\bar{\bx})-q_1 \leq q_0 .
\]
Therefore when 
\[
k= \left\lceil |\bar{F}_1 \setminus F^{(0)}| (\rho_+(1)/\rho_-(s)) \ln (2\mu) \right\rceil ,
\]
we have from (\ref{eq:proof-multi-steps}) that if $|\bar{F}_1\setminus \bar{F}^{(k)}| \neq 0$, then
\begin{align*}
&  Q(\bx^{(k)})-Q(\bar{\bx})-q_1 \\
~\leq&~
\exp \left[- \frac{\rho_-(s) k}{ \rho_+(1) |\bar{F}_1 \setminus F^{(0)}|}\right]  q_0\\
\leq& (2\mu)^{-1} q_0 .
\end{align*}
Note that this inequality also holds when $|\bar{F}_1\setminus \bar{F}^{(k)}|=0$, and in such case
(\ref{eq:proof-multi-steps}) does not apply. This is because in this case $Q(\bx^{(k)}) \leq \min_{\bx: \supp(\bx) \subset \bar{F}_1} Q(\bx) \leq Q(\bar{\bx})+q_1$.
Therefore the lemma always holds when $L=1$.

Now assume that the lemma holds at $L=m-1$ for some $m>1$. That is, with
\[
k' =\sum_{j=1}^{m-1}  \left\lceil |\bar{F}_{j}\setminus F^{(0)}| (\rho_+(1)/\rho_-(s)) \ln (2\mu) \right\rceil ,
\]
we have 
\[
Q(\bx^{(k')}) \leq Q(\bar{\bx}) + q_{m-1} + \mu^{-1} q_{m-2}.
\]
This implies that when $L=m$:
\[
Q(\bx^{(k')})-Q(\bar{\bx})-q_L \leq q_{L-1} + \mu^{-1} q_{L-2} - q_L
\leq 2 q_{L-1} .
\]
We thus obtain from (\ref{eq:proof-multi-steps}) that if $|\bar{F}_L\setminus \bar{F}^{(k)}| \neq 0$, then
\begin{align*}
&Q(\bx^{(k)})-Q(\bar{\bx})-q_L \\
~\leq&~ 
\exp \left[- \frac{\rho_-(s) (k-k')}{ \rho_+(1) |\bar{F}_L \setminus F^{(0)}|}\right]  (2 q_{L-1})\\
\leq& (2\mu)^{-1} (2 q_{L-1}) .
\end{align*}
Again this inequality also holds when $|\bar{F}_L\setminus \bar{F}^{(k)}|=0$, and in such case
(\ref{eq:proof-multi-steps}) does not apply. This is because in this case $Q(\bx^{(k)}) \leq \min_{\bx: \supp(\bx) \subset \bar{F}_L} Q(\bx) \leq Q(\bar{\bx})+q_L$.
This finishes the induction.
\end{proof}

The following lemma is a slightly stronger version of the theorem, which we can prove more easily by induction.
\begin{lemma}
  Consider the OMP algorithm. 
  If there exist $k$ and $s$ such that $|\bar{F}\cup F^{(k)}| \leq s$ and
\[
k = \left\lceil 4 |\bar{F}\setminus F^{(0)}| \frac{\rho_+(1)}{\rho_-(s)} \ln \frac{20 \rho_+(|\bar{F}\setminus F^{(0)}|)}{\rho_-(s)} \right\rceil ,
\]
then 
\begin{equation}
   Q(\bx^{(k)}) \leq Q(\bar{\bx}) +  2.5 \epsilon_s(\bar{\bx})^2/\rho_-(s) .
   \label{eq:proof-strong}
 \end{equation}
\label{lem:omp_rip}
\end{lemma}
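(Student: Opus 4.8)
The plan is to prove the lemma by induction on $\bar{k}'=|\bar{F}\setminus F^{(0)}|$, holding the target $\bar{\bx}$ fixed (so that $\epsilon_s(\bar{\bx})$ and $s$ never change) and always measuring progress against $\bar{\bx}$ directly; this is what prevents the $\epsilon_s(\bar{\bx})^2$ error from accumulating across the recursion and lets the noise-free case ($\epsilon_s(\bar{\bx})=0$) collapse to exact recovery. The base case $\bar{k}'=0$ is immediate: then $\bar{F}\subset F^{(0)}\subset F^{(k)}$, the tail $\bar{\bx}_{\bar{F}\setminus F^{(k)}}$ is empty, and Lemma~\ref{lem:Q-bound} already gives $Q(\bx^{(0)})-Q(\bar{\bx})\leq 0.5\,\epsilon_s(\bar{\bx})^2/\rho_+(s)\leq 2.5\,\epsilon_s(\bar{\bx})^2/\rho_-(s)$ at $k=0$. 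For the inductive step I would execute one \emph{round} of OMP iterations whose purpose is to bring at least half of the still-uncaptured coordinates of $\bar{F}$ into the active set, so that the number $m=|\bar{F}\setminus F^{(k)}|$ of uncaptured target coordinates is at least halved; I then reinterpret the current $F^{(k)}$ as a fresh initial set with the same $\bar{\bx}$ and invoke the inductive hypothesis on the strictly smaller subproblem.

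Concretely, for a round starting with $m$ uncaptured coordinates I would allot about $2\,(\rho_+(1)/\rho_-(s))\,m\,\ln(20\rho_+(\bar{k}')/\rho_-(s))$ iterations. Lemma~\ref{lem:one-step} (equivalently, a single-level instance of Lemma~\ref{lem:progress}) shows that each OMP step multiplies $\max(0,Q(\bx^{(k)})-Q(\bar{\bx}))$ by a factor at most $1-\rho_-(s)/(\rho_+(1)\,|\bar{F}\setminus F^{(k)}|)$, so over the allotted steps the excess $Q(\bx^{(k)})-Q(\bar{\bx})$ shrinks by a factor of order $(20\rho_+(\bar{k}')/\rho_-(s))^{-2}$. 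I would then convert this small excess into a structural statement: combining the lower strong convexity $\rho_-(s)$ with Lemma~\ref{lem:Q-bound} applied to the larger-magnitude half of the uncaptured coordinates, a sufficiently small excess forces at least half of those coordinates already to lie in $F^{(k)}$, since otherwise their residual gradient contribution would keep $Q$ large. This halving makes the per-round budgets geometric, and because $\rho_+(\cdot)$ is monotone and $Q(\bx^{(k)})$ is non-increasing, the subproblem budget $k(\bar{k}'')\leq 2\,\bar{k}'(\rho_+(1)/\rho_-(s))\,\ln(20\rho_+(\bar{k}')/\rho_-(s))$ fits together with the round-$1$ budget to total exactly $4\,\bar{k}'(\rho_+(1)/\rho_-(s))\,\ln(20\rho_+(\bar{k}')/\rho_-(s))$, matching the stated $k$. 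The hypothesis $|\bar{F}\cup F^{(k)}|\leq s$ keeps $\rho_\pm(s)$ valid at every round, and the inductive hypothesis delivers the single final bound $2.5\,\epsilon_s(\bar{\bx})^2/\rho_-(s)$.

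I expect the main obstacle to be exactly the capture step of the second paragraph: turning a purely quantitative bound on $Q(\bx^{(k)})-Q(\bar{\bx})$ into the combinatorial conclusion that a constant fraction of the remaining coordinates of $\bar{F}$ has entered $F^{(k)}$. Unlike the optimization estimates of Lemmas~\ref{lem:Q-bound}--\ref{lem:one-step}, this must exploit the greedy selection rule and the lower bound $\rho_-(s)$ in tandem, and it is where the logarithmic factor $\ln(20\rho_+(\bar{k}')/\rho_-(s))$ and the leading constant $4$ are genuinely pinned down (a single application of Lemma~\ref{lem:progress} cannot suffice, since for equal-magnitude coefficients its guarantee is bounded below by a positive multiple of $\|\bar{\bx}\|_2^2$ and so never yields the exact recovery required at $\epsilon_s(\bar{\bx})=0$). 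A secondary but fiddly point is the bookkeeping across the $O(\ln\bar{k}')$ rounds: the per-round ceilings, the replacement of $\rho_+(\bar{k}'/2^t)$ by the uniform $\rho_+(\bar{k}')$, and the early termination of the recursion once the remaining tail energy falls to the noise floor $\sim\epsilon_s(\bar{\bx})^2/\rho_+(s)$, all of which must be arranged so that the total stays below the single global budget $k$ and no $\epsilon_s(\bar{\bx})^2$ term is incurred more than once.
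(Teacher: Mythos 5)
Your outer skeleton (induction on $|\bar{F}\setminus F^{(0)}|$, restarting OMP from the current $F^{(k)}$ as a fresh initial set, measuring all progress against the fixed $\bar{\bx}$ so the noise term $\epsilon_s(\bar{\bx})^2$ is paid only once) does match the paper, and your per-round contraction estimate is correct: with a budget of $2m(\rho_+(1)/\rho_-(s))\ln(20\rho_+/\rho_-)$ steps the excess shrinks by a factor $(20\rho_+/\rho_-)^{-2}$. But the step you yourself flag as the main obstacle --- ``a sufficiently small excess forces at least half of the uncaptured coordinates into $F^{(k)}$'' --- is not merely hard; it is false given what the lemmas can deliver, and the budget accounting built on it collapses. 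What the round actually yields, via Lemma~\ref{lem:param-est}, is an \emph{energy} statement: $\sum_{i\in\bar{F}\setminus F^{(k)}}\bar{\bx}_i^2 \leq \|\bx^{(k)}-\bar{\bx}\|_2^2$ is at most a small constant times $(\rho_-(s)/\rho_+(m))\|\bar{\bx}_{\bar{F}\setminus F^{(0)}}\|_2^2$ plus noise terms. When the coefficients decay rapidly (say $\bar{\bx}_i^2\propto 4^{-i}$), this energy bound is consistent with all but $O(1)$ of the $m$ coordinates remaining uncaptured: the small coordinates contribute so little energy that nothing forces them in. Your round then spends $\Theta(m)$ iterations (up to the log factor) to shrink the subproblem by only $O(1)$, and the recursion costs $\Theta(m^2)$ iterations rather than the $O(m\ln(20\rho_+/\rho_-))$ you must fit inside $k$. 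Worse, when $\epsilon_s(\bar{\bx})>0$, coordinates with $\bar{\bx}_i^2$ below the noise floor may \emph{never} enter $F^{(k)}$, so no iteration count whatsoever can guarantee that a fixed fraction of coordinates is captured; a correct proof must be allowed to stop without capturing them.

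The paper's resolution --- the idea missing from your proposal --- is to make each round's iteration budget proportional to the number of coordinates that round is \emph{guaranteed to capture}, rather than to the number still uncaptured. Sort the coefficients on $\bar{F}\setminus F^{(0)}$ by decreasing magnitude, set $\mu=10\rho_+(m)/\rho_-(s)$, and let $L$ be the first dyadic level at which the tail energy exhibits a $\mu$-gap: $\sum_{i=2^{L-1}}^m\bar{\bx}_i^2\geq\mu\sum_{i=2^{L}}^m\bar{\bx}_i^2$. Feeding the nested sets $\bar{F}_\ell$ (top $2^\ell-1$ coordinates) with benchmarks $q_\ell$ from Lemma~\ref{lem:Q-bound} into the multi-level Lemma~\ref{lem:progress} costs only about $2^{L+1}(\rho_+(1)/\rho_-(s))\ln(2\mu)$ iterations --- each level only needs to improve on the previous benchmark by a factor $\mu$, which is exactly what circumvents the objection you raised against a single global contraction --- and drives the excess below $3\mu^{-1}\rho_+(m)\sum_{i=2^{L-1}}^m\bar{\bx}_i^2$ plus noise. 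Then either this quantity is already at the noise floor, in which case the conclusion (\ref{eq:proof-strong}) holds outright and the recursion terminates early, or Lemma~\ref{lem:param-est} gives $\|\bx^{(k)}-\bar{\bx}\|_2^2<\sum_{i=2^{L-1}}^m\bar{\bx}_i^2$, which by the sorting forces the top $2^{L-1}-1$ coordinates into $F^{(k)}$, i.e.\ $|\bar{F}\setminus F^{(k)}|\leq m-2^{L-1}$. The budget spent is about $4\cdot 2^{L-1}(\rho_+(1)/\rho_-(s))\ln(2\mu)$, proportional to the $2^{L-1}$ coordinates captured, so adding the inductive budget $\lceil 4(m-2^{L-1})(\rho_+(1)/\rho_-(s))\ln(2\mu)\rceil$ stays within $\lceil 4m(\rho_+(1)/\rho_-(s))\ln(2\mu)\rceil$ whether $L=1$ or $L=\lfloor\log_2 m\rfloor+1$. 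With this gap-level decomposition no halving is needed; without it, your halving invariant cannot be established.
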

\begin{proof}
We prove this result by induction on $|\bar{F}\setminus F^{(0)}|$.
If $|\bar{F}\setminus F^{(0)}|=0$, then the bound in (\ref{eq:proof-strong}) holds trivially
because $Q(\bx^{(k)}) \leq Q(\bx^{(0)}) \leq Q(\bar{\bx})$.

Assume that the claim holds with $|\bar{F}\setminus F^{(0)}| \leq m-1$ for some $m > 0$. 
Now we consider the case of $|\bar{F}\setminus F^{(0)}| = m$.
Without loss of generality, we assume for notational convenience that $\bar{F}\setminus F^{(0)}=\{1,\ldots,m\}$, and
$|\bar{\bx}_j|$ in $\bar{F}\setminus F^{(0)}$ is arranged in descending order so that
$|\bar{\bx}_1| \geq |\bar{\bx}_2| \geq \cdots \geq |\bar{\bx}_m|$.
Let $L$ be the smallest positive integer such that for all $1 \leq \ell < L$, we have
\[
\sum_{i=2^{\ell-1}}^m \bar{\bx}_i^2 < \mu
\sum_{i=2^{\ell}}^m \bar{\bx}_i^2 ,
\]
but 
\begin{equation}
\sum_{i=2^{L-1}}^m \bar{\bx}_{i}^2 \geq
\mu  \sum_{i=2^{L}}^m \bar{\bx}_i^2 , \label{eq:proof-decay}
\end{equation}
where $\mu=10\rho_+(m)/\rho_-(s)$.
We have $L \leq \lfloor \log_2 m \rfloor+1$ because the second inequality
is automatically satisfied when $L = \lfloor \log_2 m \rfloor+1$
(the right hand side is zero in this case).
Moreover, if the second inequality is always satisfied for all $L \geq 1$, then we can simply take $L=1$ (and ignore the
first inequality).

We can now define
\[
\bar{F}_\ell = (\bar{F} \cap F^{(0)}) \cup \{i: 1 \leq i \leq \min(m,2^{\ell}-1)\} 
\]
for $\ell=0,1,2 ,\ldots, L$.

Lemma~\ref{lem:Q-bound} implies that for $\ell=0,1,\ldots,L$:
\begin{align*}
&\min_{\bx \subset \bar{F}_\ell} Q(\bx) \leq Q(\bar{\bx}) + q_\ell ,\\
&q_\ell =  1.5 \rho_+(m) \sum_{i=2^{\ell}}^m \bar{\bx}_i^2  +  0.5 \epsilon_s(\bar{\bx})^2 /\rho_+(m) . 
\end{align*}
Moreover $q_{\ell-1} \leq \mu q_\ell$ when $\ell=1,\ldots,L-1$.
We can thus apply Lemma~\ref{lem:progress} to conclude that when 
\begin{align}
k =&\sum_{j=1}^{L} \left\lceil  (2^j-1) (\rho_+(1)/\rho_-(s)) \ln (2\mu) \right\rceil \nonumber\\
\leq&   2^{L+1} (\rho_+(1)/\rho_-(s)) \ln (2\mu) -1 ,
\label{eq:proof-ind-k}
\end{align}
we have
\begin{align}
&Q(\bx^{(k)}) - Q(\bar{\bx}) \nonumber\\
\leq &
1.5 \rho_+(m) \sum_{i=2^{L}}^m \bar{\bx}_i^2  
+ 1.5 \mu^{-1} \rho_+(m) \sum_{i=2^{L-1}}^m \bar{\bx}_i^2  \nonumber\\
&\quad
+ 0.5 (1+\mu^{-1}) \epsilon_s(\bar{\bx})^2 /\rho_+(m)  \nonumber\\
\leq& 3 \mu^{-1} \rho_+(m) \sum_{i=2^{L-1}}^m \bar{\bx}_i^2  
+  \frac{0.5}{\rho_+(m)} (1+\mu^{-1}) \epsilon_s(\bar{\bx})^2   , \label{eq:proof-barx-progress}
\end{align}
where (\ref{eq:proof-decay}) is used to derive the second
inequality.

Now, if 
\begin{equation}
 2\mu^{-1} \rho_+(m)\sum_{i=2^{L-1}}^m \bar{\bx}_i^2   \leq (1+\mu^{-1}) \epsilon_s(\bar{\bx})^2 /\rho_-(s) ,
\label{eq:proof-barx-small}
\end{equation}
then (\ref{eq:proof-barx-progress}) implies that (\ref{eq:proof-strong}) holds automatically
(since $\mu \geq 10$), which finishes the induction.
Therefore in the following, we only consider the case (\ref{eq:proof-barx-small}) does not hold, which implies that
\[
 2\mu^{-1} \rho_+(m)\sum_{i=2^{L-1}}^m \bar{\bx}_i^2  
 > (1+\mu^{-1}) \epsilon_s(\bar{\bx})^2 /\rho_-(s) .
\]
Now Lemma~\ref{lem:param-est} implies that 
\begin{align*}
&\rho_-(s) \|\bx^{(k)} - \bar{\bx}\|_2^2\\
\leq& 2 (Q(\bx^{(k)}) - Q(\bar{\bx})) + \epsilon_s(\bar{\bx})^2/\rho_-(s)\\
\leq& 6 \mu^{-1} \rho_+(m) \sum_{i=2^{L-1}}^m \bar{\bx}_i^2  
+ (2+\mu^{-1}) \epsilon_s(\bar{\bx})^2/\rho_-(s)\\
<&
10 \mu^{-1} \rho_+(m) \sum_{i=2^{L-1}}^m \bar{\bx}_i^2  
=\rho_-(s) \sum_{i=2^{L-1}}^m \bar{\bx}_i^2  .
\end{align*}
This implies that
\[
\sum_{i=m-|\bar{F}\setminus F^{(k)}|+1}^m \bar{\bx}_i^2
\leq \sum_{ i \in \bar{F}\setminus F^{(k)}} \bar{\bx}_i^2
\leq \|\bx^{(k)} - \bar{\bx}\|_2^2
< \sum_{i=2^{L-1}}^m \bar{\bx}_i^2 .
\]
Therefore $m-|\bar{F}\setminus F^{(k)}|+1> 2^{L-1}$. That is,
$|\bar{F}\setminus F^{(k)}| \leq m - 2^{L-1}$.
It follows from the induction hypothesis that after another 
\[
\lceil 4 (m-2^{L-1}) (\rho_+(1)/\rho_-(s)) \ln (2\mu) \rceil
\]
OMP iterations, (\ref{eq:proof-strong}) holds.
Therefore by combining this estimate
with (\ref{eq:proof-ind-k}), we know that 
the total number of OMP iterations for (\ref{eq:proof-strong}) to hold (starting with $F^{(0)}$) is no more than
\begin{align*}
&\lceil 4 (m-2^{L-1})  (\rho_+(1)/\rho_-(s)) \ln (2\mu) \rceil \\
& \quad +  2^{L+1} (\rho_+(1)/\rho_-(s)) \ln (2\mu)  -1 \\
\leq & \lceil 4 m  (\rho_+(1)/\rho_-(s)) \ln (2\mu) \rceil .
\end{align*}
This finishes the induction step for the case $|\bar{F}\setminus F^{(0)}|=m$.
\end{proof}

\begin{IEEEbiographynophoto}{Tong Zhang}
Tong Zhang received a B.A. in mathematics and computer science from Cornell University in 1994 and a Ph.D. in Computer Science from Stanford University in 1998.
 After graduation, he worked at IBM T.J. Watson Research Center in Yorktown Heights, New York, and Yahoo Research in New York city. He is currently a professor of statistics at Rutgers University.
His research interests include machine learning, algorithms for statistical computation, their mathematical analysis and applications.
\end{IEEEbiographynophoto}

\end{document}